\documentclass[preprintnumbers,amsmath,amssymb,floatfix,10pt,prd,twocolumn,superscriptaddress,nofootinbib]{revtex4}

\usepackage[parfill]{parskip}
\usepackage{graphicx}
\usepackage{xcolor}
\usepackage{float}
\usepackage{orcidlink}
\usepackage{adjustbox}
\usepackage{bm}
\usepackage{mathtools}
\usepackage{amsthm}
\usepackage{epstopdf}
\usepackage{hyperref}
\newtheorem{theorem}{Theorem}
\newtheorem{lemma}{Lemma}
\newtheorem{proposition}{Proposition}
\newtheorem{corollary}{Corollary}
\theoremstyle{remark}
\newtheorem{remark}{Remark}

\newcommand{\added}[1]{{#1}}

\begin{document}

\title{Horizon formation from effective matter profiles in static spacetimes}

\author{Pablo León \orcidlink{0000-0002-7104-5746}}
\email{pleot@uc.cl}
\affiliation{Instituto de Física, Pontificia Universidad Católica de Chile, Av. Vicuña Mackenna 4860, Santiago, Chile.}

\author{G. Alencar}
\email{geova@fisica.ufc.br}
\affiliation{Departamento de Física, Universidade Federal do Ceará, Fortaleza, Ceará, Brazil}

\author{Manuel Gonzalez-Espinoza
 \orcidlink{0000-0003-0961-8029}}
\email{manuel.gonzalez@pucv.cl}
\affiliation{
Instituto de F\'{\i}sica, Pontificia Universidad Cat\'olica de 
Valpara\'{\i}so, 
Casilla 4950, Valpara\'{\i}so, Chile.}

\author{Francisco Tello-Ortiz \orcidlink{0000-0002-7104-5746}}
\email{francisco.tello@ufrontera.cl}
\affiliation{Departamento de Ciencias Físicas, Universidad de La Frontera, Casilla 54-D, 4811186 Temuco, Chile.}

\begin{abstract}
{The formation of event horizons is traditionally studied as the endpoint of gravitational collapse, where the matter distribution and the spacetime geometry are evolved simultaneously toward a black hole state. In this work, we consider the inverse problem: given a static geometry containing the simplest causally exposed singular structure, namely a timelike naked singularity, what are the minimal conditions on the surrounding matter required for the emergence of an event horizon?
Within classical general relativity, we derive sufficient conditions for horizon formation in terms of the radial organization, compactness and finiteness of the matter distribution. These conditions are summarized by a simple geometric criterion that determines whether a static configuration becomes causally inaccessible to external observers. We further identify situations in which horizon formation necessarily fails, thereby characterizing both the existence and obstruction of causal cloaking in static spacetimes.
Our results show that the emergence of an event horizon is not controlled solely by the total amount of matter, but rather by the way mass-energy is radially accumulated. This provides a minimal and geometrically transparent framework for understanding horizon formation as a causal transition in static spacetimes, independent of gravitational collapse or dynamical evolution.
}
\end{abstract}

\maketitle

%========================================================

\section{Introduction }

The appearance of spacetime singularities remains one of the deepest conceptual problems of general relativity, challenging the predictive character of the theory itself. Under broad and physically reasonable assumptions, the singularity theorems of Hawking and Penrose imply that singular behavior is generically unavoidable in gravitational collapse and cosmology \cite{HawkingEllis1973,Wald1984,Penrose:1964wq}. In many situations these singularities are hidden behind event horizons and therefore remain causally disconnected from asymptotic observers. However, Einstein's equations also admit solutions containing naked singularities, namely singular regions not protected by horizons and therefore causally accessible from infinity \cite{Eardley1978,Joshi2007,JoshiMalafarina2011,Gundlach2025}. Such configurations are particularly problematic because they potentially spoil deterministic evolution and the global predictability of spacetime.

Beyond their physical interpretation, naked singularities offer a useful theoretical setting in which to investigate more general questions concerning causal accessibility and the emergence of event horizons. In particular, they provide the simplest examples of static geometries in which the causal shielding usually associated with black holes is absent. This naturally leads to the following question: what are the minimal conditions required for an event horizon to emerge around a causally exposed region?

The possibility that physically realistic gravitational collapse should never produce naked singularities motivated Penrose's cosmic censorship conjecture \cite{Penrose:1969pc}. Despite decades of work, no general proof of the conjecture exists, and numerous collapse scenarios exhibiting naked singularities have been reported in the literature \cite{Joshi2007,Christodoulou1984,Christodoulou1999}. At the same time, several approaches have been developed to either regularize singular geometries or shield them behind horizons. These include regular black hole constructions \cite{Bardeen1968,Hayward2006,Dymnikova1992}, gravastar and shell-like models \cite{MazurMottola2004,VisserWiltshire2004}, and semiclassical dressing mechanisms \cite{Casals2016}. Most of these approaches focus on constructing particular geometries sourced by specific matter sectors.

The purpose of the present work is different. Rather than constructing a new black-hole solution or proposing a particular regularization mechanism, we investigate the inverse problem of horizon formation. Instead of specifying a matter sector and studying whether gravitational collapse leads to a black hole, we consider the opposite question: given a static horizonless geometry, what are the minimal properties that a matter distribution must satisfy in order to generate an event horizon? More specifically, we seek sufficient conditions guaranteeing the emergence of a single outer horizon and a causally simple exterior region.

This formulation is deliberately independent of the microscopic origin of the source. The same conditions may be tested for classical anisotropic fluids, effective matter sectors induced by additional fields, semiclassical corrections, or phenomenological stress tensors arising from modified-gravity models once they are written in Einstein form. Our emphasis is therefore not on the detailed nature of the source, but on the relation between matter organization and causal structure.

To address this problem, we employ the negative-mass Schwarzschild geometry as a minimal and analytically transparent static horizonless background. Although negative mass is itself exotic, the spacetime is asymptotically flat, possesses a timelike singularity at the origin, and most importantly lacks an event horizon. In the present work, this geometry is used only as a theoretical probe. We do not assume that negative-mass configurations are physically realized, nor do we regard the vacuum negative-mass solution as a realistic endpoint of gravitational collapse. Its role is instead diagnostic: it removes inessential complications and allows one to isolate, in the cleanest possible setting, the matter properties responsible for horizon emergence. It is worth mentioning that this solution has been proven to be unstable \cite{Gleiser:2006yz}. This instability is consistent with our interpretation of the seed metric as a theoretical laboratory rather than as a stable physical object.

The usual direction in gravitational-collapse studies is forward: one specifies matter content and initial conditions and subsequently determines whether collapse leads to a black hole or to a naked singularity \cite{Joshi2007,JoshiMalafarina2011}. Here we adopt the opposite viewpoint. Starting from an already existing horizonless geometry, we ask which features of a static matter distribution are sufficient to generate an event horizon. In this sense, the problem addressed here should not be viewed as an attempt to hide a pathology, but rather as an investigation of the minimal causal mechanism responsible for horizon formation in static spacetimes.

In this work we analyze the problem entirely within classical general relativity. Assuming a static and spherically symmetric anisotropic source, we solve Einstein's equations exactly and express the geometry in terms of a general density profile $\rho(r)$. The horizon structure can then be reduced to the global analysis of an auxiliary function $\Phi(r)=2m(r)-r$,
whose zeros determine the existence and multiplicity of horizons. This allows us to formulate the problem geometrically in terms of the extrema structure of $\Phi(r)$ and the behavior of the quantity $r^{2}\rho(r)$.

Our main result is a theorem providing sufficient conditions for horizon formation. Roughly speaking, we show that if the density profile is non-negative, sufficiently localized, monotonic in an appropriate sense, and large enough to compensate the causal deficit of the seed geometry, then the resulting spacetime develops a unique outer event horizon with a causally simple exterior. Conversely, violations of these conditions naturally lead to multi-horizon geometries or prevent horizon formation altogether. Thus, the theorem identifies which properties of the matter sector are genuinely responsible for the emergence of horizons and which merely affect the detailed geometry.

The paper is organized as follows. In Sect.~\ref{sec2} we derive the Einstein equations for a negative-mass Schwarzschild geometry and obtain the general metric associated with an arbitrary static density profile. In Sect.~\ref{sec3} we analyze the global structure of the horizon equation and reformulate the causal problem in terms of the auxiliary function $\Phi(r)=2m(r)-r$. In Sect.~\ref{sec4} we formulate and prove the Sufficient horizon-formation conditions theorem, establishing sufficient conditions for the existence of a unique outer event horizon together with the corresponding failure modes. In Sect.~\ref{sec5} we study representative dressing profiles illustrating different causal behaviors, including asymptotically Schwarzschild layers, critical logarithmic branches, RN-like geometries, and smooth T-duality-inspired dressings. Finally, in Sect.~\ref{sec6} we discuss the physical implications of the results.

%========================================================
\section{Einstein equations for negative-mass Schwarzschild geometry}
\label{sec2}
%========================================================

We consider a static and spherically symmetric line element in Schwarzschild-like coordinates,
\begin{equation}
ds^2=-f(r)\,dt^2+\frac{dr^2}{f(r)}+r^2 d\Omega^2,
\label{metric}
\end{equation}
with \(d\Omega^2=d\theta^2+\sin^2\theta\,d\phi^2\). We assume an anisotropic matter source
\begin{equation}
T^\mu{}_\nu=\text{diag}\left(-\rho(r),\,p_r(r),\,p_t(r),\,p_t(r)\right).
\label{stress}
\end{equation}
For the metric ansatz \eqref{metric}, Einstein's equations imply the standard mass-function representation
\begin{equation}
f(r)=1-\frac{2m(r)}{r},
\label{frmass}
\end{equation}
where the mass function satisfies
\begin{equation}
m'(r)=4\pi r^2 \rho(r).
\label{mprime}
\end{equation}
Integrating,
\begin{equation}
m(r)=m_0+4\pi\int_0^r \rho(s)s^2 ds,
\label{mint}
\end{equation}
with \(m_0\) an integration constant. To reproduce the negative-mass Schwarzschild behavior near the origin in the absence of dressing matter, we set
\begin{equation}
m_0=-|M|, \qquad |M|>0.
\label{m0}
\end{equation}
\added{This choice should be understood as a minimal seed geometry for the horizon-generation problem. The negative-mass Schwarzschild solution is not introduced here as a realistic physical object, nor as a proposed endpoint of gravitational collapse. Rather, it is the simplest static, asymptotically flat, horizonless geometry with a timelike singularity, and therefore provides a clean background in which the role of the additional matter sector in generating horizons can be isolated.}
Hence,
\begin{equation}
m(r)=-|M|+4\pi\int_0^r \rho(s)s^2 ds.
\label{mfinal}
\end{equation}
The resulting lapse function is
\begin{equation}
f(r)=1+\frac{2|M|}{r}-\frac{8\pi}{r}\int_0^r \rho(s)s^2 ds.
\label{ffinal}
\end{equation}

It is useful to isolate the matter contribution by defining
\begin{equation}
\Delta m(r)\equiv 4\pi\int_0^r \rho(s)s^2 ds,
\qquad
m(r)=-|M|+\Delta m(r).
\end{equation}
Then
\begin{equation}
f(r)=1-\frac{2}{r}\left[-|M|+\Delta m(r)\right].
\end{equation}
This expression already makes clear the mechanism behind horizon generation: the bare negative mass contributes positively to \(f(r)\) through the term \(+2|M|/r\), so the negative-mass Schwarzschild geometry has no horizon, while a sufficiently large positive matter contribution can reduce \(f(r)\) and drive it through zero. \added{In this formulation, the central question is therefore not whether the seed spacetime is itself physically realized, but which minimal properties of the effective matter distribution are required for the metric function to develop zeros.}

{
The radial and tangential pressures follow from the remaining Einstein equations. For the metric ansatz \eqref{metric}, one has \(G^t{}_t=G^r{}_r\), which implies
\begin{equation}
p_r(r)=-\rho(r).
\label{pr}
\end{equation}
The tangential pressure is
\begin{equation}
8\pi p_t(r)=-\frac{m''(r)}{r}
=-\frac{1}{r}\frac{d}{dr}\left(4\pi r^2\rho(r)\right)
=-4\pi\left(2\rho+r\rho'\right).
\label{pt}
\end{equation}
Equations \eqref{mprime}--\eqref{pt} show that once \(\rho(r)\) is specified, the full source is fixed. \added{The source may be interpreted at this stage as an effective stress-energy tensor. Its microscopic origin is not fixed by the formalism: it may correspond to a classical anisotropic distribution, to additional matter fields, or to effective corrections whose net contribution can be written in the form \eqref{stress}.}
}

A first necessary condition for physical admissibility is asymptotic flatness. From \eqref{ffinal} it follows that this requires the matter sector to contribute a finite total mass,
\begin{equation}
\Delta M\equiv \lim_{r\to\infty}\Delta m(r)=4\pi\int_0^\infty \rho(s)s^2 ds <\infty.
\label{deltaM}
\end{equation}
When \eqref{deltaM} holds,
\begin{equation}
m(r)\to M_{\rm eff}\equiv -|M|+\Delta M,
\qquad
f(r)\sim 1-\frac{2M_{\rm eff}}{r},
\label{asymp}
\end{equation}
as $r\to+\infty$.
Thus, the dressed spacetime is asymptotically Schwarzschild with ADM mass \(M_{\rm eff}\). \added{The horizon-generation problem is then controlled by how the accumulated matter contribution \(\Delta m(r)\) competes with the negative seed contribution at finite radius, not only by the value of the total ADM mass.}

The sign of \(M_{\rm eff}\) will play a central role. If \(M_{\rm eff}\le 0\), then the asymptotic region does not resemble a positive-mass black hole exterior. If \(M_{\rm eff}>0\), there is at least the possibility of black-hole-like horizon generation. However, positivity of the asymptotic mass alone is not enough; one must also control the number of zeros of \(f(r)\). \added{Consequently, the relevant classification problem is to determine which density profiles generate a single outer zero of \(f(r)\), which profiles generate multiple horizons, and which profiles fail to generate horizons altogether.}

%========================================================

\section{Global horizon analysis}
\label{sec3}
%========================================================

The horizon equation is
\begin{equation}
f(r_H)=0
\quad \Longleftrightarrow \quad
2m(r_H)=r_H.
\label{horizon}
\end{equation}
It is advantageous to define the auxiliary function
\begin{equation}
\Phi(r)\equiv 2m(r)-r
=-2|M|+8\pi\int_0^r \rho(s)s^2 ds-r.
\label{Phi}
\end{equation}
Then horizons are precisely the zeros of \(\Phi\)
\begin{equation}
\Phi(r_H)=0.
\end{equation}
\added{Thus, the question of horizon generation is reduced to the global behavior of a single real function. This is the main reason for using the negative-mass Schwarzschild seed: it leads to the simplest possible competition between a horizonless timelike-singular background and the accumulated contribution of an additional effective matter sector.}
The derivatives of \(\Phi\) are
\begin{equation}
\Phi'(r)=8\pi r^2 \rho(r)-1,
\label{Phiprime}
\end{equation}
\begin{equation}
\Phi''(r)=8\pi r\left(2\rho+r\rho'\right).
\label{Phipp}
\end{equation}

The function \(\Phi\) packages the entire horizon problem into a one-dimensional global analysis. Several general properties are immediate. \added{In this formulation the analysis does not rely on a particular microscopic interpretation of the source. The density profile \(\rho(r)\) may represent a classical anisotropic distribution, an effective contribution induced by additional fields, or the Einstein-frame description of corrections coming from a more fundamental theory. What matters for horizon generation is the way in which this effective source contributes to the mass function.}

Near the origin, condition (\ref{deltaM}) leads to 
\begin{equation}
\int_0^r \rho(s)s^2 ds = o(1), \qquad r\to 0,
\end{equation}
hence
\begin{equation}
\Phi(0^+)=-2|M|<0.
\label{phi0}
\end{equation}
Thus the geometry starts in the naked regime. \added{Equivalently, the seed contribution alone does not generate a zero of \(f(r)\); any horizon must therefore be produced by the additional matter contribution at finite radius.}

At infinity, if \eqref{deltaM} holds,
\begin{equation}
\Phi(r)=2M_{\rm eff}-r+o(1), \qquad r\to\infty,
\end{equation}
so in particular
\begin{equation}
\Phi(r)\to -\infty,
\qquad r\to\infty.
\label{phiinf}
\end{equation}
Therefore \(\Phi\) is negative both near the origin and at infinity. This means that for a horizon to exist, \(\Phi\) must become positive somewhere in between. Hence horizon formation is equivalent to the existence of a positive hump in \(\Phi\). This observation shows that the problem is not merely whether the added mass exceeds the negative Schwarzschild mass parameter asymptotically, but whether the local accumulation of matter is strong enough to lift \(\Phi\) above zero at some finite radius. \added{The horizon-generation mechanism is therefore intrinsically nonlocal: it depends on the cumulative distribution of matter, not only on the total mass.}

\begin{lemma}[Existence and nature of horizons]
\label{lemma:existence_classification}
Let $\Phi(r)$ be defined in Eq.~\eqref{Phi}, and assume that $\rho(r)\geq 0$ and $\Delta M < \infty$ so that the asymptotic behavior \eqref{asymp}--\eqref{phiinf} holds.
Then:
\begin{enumerate}
\item[(i)] A horizon exists if and only if
\begin{equation}
\sup_{r>0} \Phi(r) \geq 0.
\end{equation}
\item[(ii)] If $\sup_{r>0} \Phi(r) > 0$, then $\Phi$ admits at least two distinct zeros. Generically, these zeros are simple, i.e.
\begin{equation}
\Phi(r_i)=0, \qquad \Phi'(r_i)\neq 0.
\end{equation}
\item[(iii)] If $\sup_{r>0} \Phi(r) = 0$, then there exists a unique $r_H$ such that
\begin{equation}
\Phi(r_H)=0, \qquad \Phi'(r_H)=0,
\end{equation}
and this zero is of multiplicity two. In this case, the horizon is degenerate.
\end{enumerate}
\end{lemma}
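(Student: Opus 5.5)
The plan is to reduce everything to elementary real analysis of the continuous function $\Phi$ on $(0,\infty)$, using only the two boundary facts already established, Eqs.~\eqref{phi0} and \eqref{phiinf}. First I will record regularity. Since $\rho\ge 0$ and $\Delta M<\infty$, the integral $\int_0^r\rho(s)s^2ds$ is finite, continuous and nondecreasing in $r$, so $\Phi$ is continuous. Whenever $\rho$ is continuous, which I assume so that Eqs.~\eqref{Phiprime}--\eqref{Phipp} make sense, $\Phi$ is $C^1$, and it is $C^2$ if $\rho\in C^1$.

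The key step is that $\sup_{r>0}\Phi$ is attained whenever it is $\ge 0$. Because $\Phi(0^+)=-2|M|<0$ and $\Phi\to-\infty$ as $r\to\infty$, I can pick $0<a<b$ with $\Phi<-|M|$ on $(0,a]\cup[b,\infty)$. If $\sup\Phi\ge 0$, the supremum can only be approached inside $[a,b]$. By compactness it is then a maximum, attained at some interior point $r_*\in[a,b]$.

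Part (i) then follows directly. A zero trivially gives $\sup\Phi\ge 0$. Conversely, if $\sup\Phi=0$, the maximizer $r_*$ is itself a zero. If $\sup\Phi>0$, the intermediate value theorem on $(0,r_*]$ and on $[r_*,\infty)$ gives zeros $r_1<r_*<r_2$, and these are distinct because $\Phi(r_*)>0$. This is also the existence statement in (ii).

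For the genericity clause of (ii), I will use Sard's theorem on the $C^1$ function $\Psi(r)=8\pi\int_0^r\rho s^2ds-r$. For almost every value of the seed parameter $2|M|$, this value is a regular value of $\Psi$. Since $\Phi=\Psi-2|M|$, for almost every $|M|$ every zero of $\Phi$ satisfies $\Phi'\neq0$. I will read ``generically'' in exactly this sense: within the one-parameter family of seeds, or equivalently under small perturbations of $\rho$.

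For (iii), with $\sup\Phi=0$ attained at the interior maximizer $r_H$, Fermat's theorem gives $\Phi'(r_H)=0$. The zero is therefore at least double, and second-order Taylor expansion gives $\Phi''(r_H)\le 0$. Multiplicity exactly two corresponds to $\Phi''(r_H)<0$, which by Eq.~\eqref{Phipp} means $(r^2\rho)'(r_H)<0$. I will state this as the nondegeneracy condition implicit in the claim.

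The main obstacle is the uniqueness assertion in (iii). As stated it does not follow from $\rho\ge0$ and $\Delta M<\infty$ alone: $\Phi$ could touch zero at two separated maxima with a negative dip in between. I would therefore prove uniqueness under an extra hypothesis and flag it explicitly. The natural choice is that $r^2\rho$ crosses the level $1/(8\pi)$ from above at most once, which makes $\Phi$ unimodal, i.e.\ increasing and then decreasing, so its maximizer is unique. This is the radial-monotonicity condition the paper announces for the main theorem. Otherwise, the statement holds with ``unique'' replaced by ``every maximizer is a double zero''.
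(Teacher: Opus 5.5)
Your argument for (i) and (ii) follows the same route as the paper: negativity of $\Phi$ near $0$ and at infinity, continuity, and the intermediate value theorem. It is more complete, though. The paper's paragraph for (i) proves only the trivial direction, and later it simply asserts that the supremum is attained; your compactness step on $[a,b]$ supplies exactly that missing piece. The paper also gives no argument for ``generically'', whereas your Sard argument (almost every value of $2|M|$ is a regular value of $\Psi$) makes the claim precise. Your continuity assumption on $\rho$ is genuinely needed for (iii): for a piecewise-continuous profile, the maximum of $\Phi$ can sit at a corner where $\Phi'$ does not exist.

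Your objection to (iii) is correct. It exposes a defect in the statement and in the paper's proof, not in yours. The paper never argues uniqueness. Its step ``the zero cannot be simple, and therefore it is of multiplicity two'' only yields multiplicity at least two.

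Both claims fail under the stated hypotheses. Take any smooth $\Phi$ with $\Phi'\ge -1$, $\Phi=-2|M|-r$ near $0$, and $\Phi=c-r$ for large $r$. It comes from the admissible, compactly supported density $\rho=(1+\Phi')/(8\pi r^2)\ge 0$. So $\Phi$ can touch zero at two separated radii, touch zero with quartic contact, or vanish on a whole interval.

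Two small corrections to your repair:
\begin{enumerate}
\item For $\rho$ regular at the origin, $\Phi'\to-1$ as $r\to 0$. So $\Phi$ is decreasing--increasing--decreasing, not ``increasing and then decreasing''. What you actually use is that $\Phi$ has a unique local maximum.
\item The condition ``$8\pi r^2\rho$ crosses $1$ from above at most once'' must be phrased so that it excludes $8\pi r^2\rho\equiv 1$ on an interval (the flat-top case above). You can do this by requiring $8\pi r^2\rho-1$ to have isolated zeros, or by using the strictly single-peaked condition~3 of Theorem~\ref{thm:dress}.
\end{enumerate}
With either of these, together with $\Phi''(r_H)<0$, your corrected version of (iii) holds.
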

\begin{proof}
From Eqs.~\eqref{phi0} and \eqref{phiinf}, the function $\Phi(r)$ is negative both near the origin and as $r \to \infty$. Since $\Phi$ is continuous, it can admit zeros only if it reaches non-negative values somewhere. This establishes (i).
If $\sup \Phi > 0$, then $\Phi$ becomes strictly positive in some interval. Since $\Phi$ is negative near $r=0$ and as $r\to\infty$, continuity implies that $\Phi$ must cross zero at least twice. This proves (ii).
If $\sup \Phi = 0$, then the maximum is attained at some $r_H$ with $\Phi(r_H)=0$. Since $r_H$ is a point of maximum, one must have $\Phi'(r_H)=0$. Moreover, since $\Phi$ does not become positive anywhere, the zero cannot be simple, and therefore it is of multiplicity two. This proves (iii).
\end{proof}

Lemma \ref{lemma:existence_classification} makes clear that without further assumptions a static matter sector can naturally produce two horizons, just as in Reissner-Nordström-like geometries. Therefore, to obtain a unique \emph{outer} event horizon and no additional horizons in the asymptotically accessible region, more structure is needed. \added{This is precisely the point of the classification developed below: horizon generation is possible only for matter profiles whose accumulated mass produces the required positive hump in \(\Phi\), while the number of horizons is controlled by the number and nature of the extrema of this function.} These facts are shown in Fig. \ref{fig1}.

\begin{figure*}
    \centering
\includegraphics[width=0.32\textwidth]{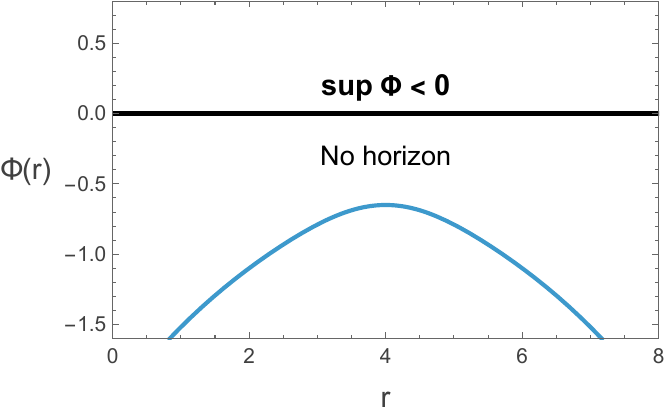}\,
\includegraphics[width=0.32\textwidth]{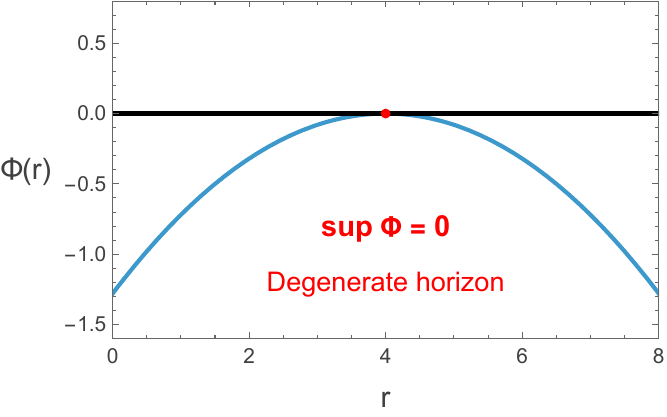}\,
\includegraphics[width=0.32\textwidth]{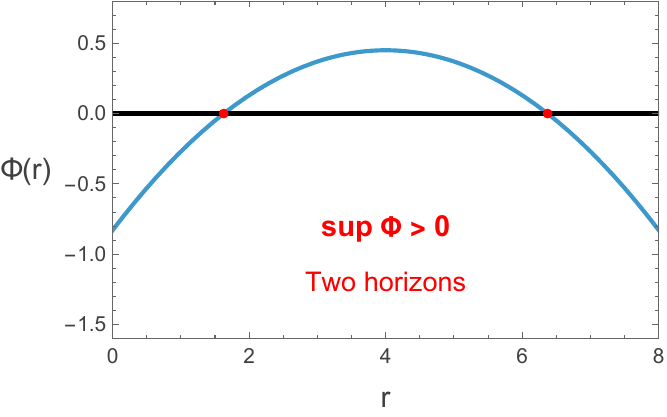}
    \caption{This panel shows all cases for the function $\Phi(r)$ given by lemma \ref{lemma:existence_classification}. }
    \label{fig1}
\end{figure*}

\begin{lemma}
\label{lem:critical}
Critical points of \(\Phi\) are solutions of
\begin{equation}
8\pi r^2\rho(r)=1.
\label{criticaleq}
\end{equation}
Hence the number of extrema of \(\Phi\) is equal to the number of intersections between the profile \(8\pi r^2\rho(r)\) and the constant value \(1\).
\end{lemma}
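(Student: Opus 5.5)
The plan is to differentiate $\Phi$ directly from its integral representation \eqref{Phi} and then read off the critical-point condition. Assuming $\rho$ is continuous on $(0,\infty)$, which is implicit in the expressions \eqref{Phiprime}--\eqref{Phipp}, the fundamental theorem of calculus gives $\frac{d}{dr}\int_0^r\rho(s)s^2\,ds=\rho(r)r^2$. This recovers \eqref{Phiprime}, $\Phi'(r)=8\pi r^2\rho(r)-1$. Since $\Phi$ is then $C^1$ on $(0,\infty)$, its critical points are exactly the zeros of $\Phi'$, and these are the solutions of \eqref{criticaleq}. That settles the first assertion.

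For the counting statement, I would define $g(r)\equiv 8\pi r^2\rho(r)$. Then $\Phi'=g-1$, and the critical set of $\Phi$ coincides with the level set $g^{-1}(1)$. The sign of $\Phi'$ on each interval of $(0,\infty)\setminus g^{-1}(1)$ is the sign of $g-1$. So a critical point $r_c$ is a strict local extremum precisely when $g-1$ changes sign at $r_c$, that is, when the graph of $g$ genuinely crosses the level $1$:
\begin{itemize}
\item a crossing from above to below gives a maximum of $\Phi$;
\item a crossing from below to above gives a minimum of $\Phi$.
\end{itemize}
The identification "number of extrema $=$ number of intersections" therefore holds under the genericity assumption already used in Lemma~\ref{lemma:existence_classification}. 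That assumption is that the intersections of $g$ with $1$ are isolated, transversal crossings, e.g. $g'(r_c)\neq0$, which by \eqref{Phipp} is equivalent to $\Phi''(r_c)\neq0$.

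The main obstacle is this identification itself, because as stated it is only true generically. If $g$ touches the level $1$ tangentially without crossing, the point is a critical point of $\Phi$ but an inflection point rather than an extremum. If $g\equiv1$ on an interval, there are infinitely many critical points and $\Phi$ is constant there. I would make the statement precise in one of two ways:
\begin{itemize}
\item interpret "extrema" as "critical points", in which case the equality is exact;
\item or add the transversality hypothesis $\Phi''(r_c)\neq0$ at each solution of \eqref{criticaleq}, in which case each intersection is a nondegenerate maximum or minimum.
\end{itemize}
Finally, I would use the boundary behavior \eqref{phi0}--\eqref{phiinf} as a consistency check. We have $g\to0$ as $r\to0^+$ for any $\rho$ that is integrable against $s^2$ and not too singular. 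We also need $g<1$ on average at large $r$, since $\Delta M<\infty$. So in the transversal case the crossings alternate between minima and maxima, and the last one is a maximum. This is the structure needed later to control the number of horizons.
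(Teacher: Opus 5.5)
Your argument is correct and follows the paper's proof, which simply reads the critical-point condition off $\Phi'(r)=8\pi r^2\rho(r)-1$ in Eq.~\eqref{Phiprime}. Your caveats correctly sharpen the statement. The paper's unqualified ``equal'' holds only when every intersection is a sign change of $8\pi r^2\rho-1$; touchings without a sign change are critical points but not extrema, so in general one only gets ``at most''. Your continuity assumption is also genuinely needed: for the discontinuous layers of Sect.~\ref{sec5} with $8\pi\rho_0 r_c^{2-n}>1$, $\Phi$ has a corner minimum at $r_c$ that is not a root of \eqref{criticaleq}.
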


\begin{proof}
This follows directly from \eqref{Phiprime}.
\end{proof}

Lemma \ref{lem:critical} turns the causal problem into a profile-intersection problem. If the function \(r^2\rho(r)\) is monotonic in the relevant domain, then \(\Phi\) can have at most one extremum there, and this strongly constrains the number of horizons. \added{In this way, the existence and multiplicity of horizons are determined by simple structural properties of the effective matter profile, rather than by the detailed microscopic origin of the source.}

%========================================================
\section{Horizon-formation conditions}
\label{sec4}
%========================================================

To make precise the notion of horizon generation by an exterior matter distribution, it is natural to assume that the matter is turned on only beyond some radius \(r_c>0\). In this way the interior region \(0<r<r_c\) remains exactly the negative-mass Schwarzschild geometry, while the matter layer modifies the spacetime only outside that radius.

Thus we take
\begin{equation}
\rho(r)=0, \qquad 0<r<r_c,
\label{compactcore}
\end{equation}
and \(\rho(r)\ge 0\) for \(r\ge r_c\). Then for \(r<r_c\),
\begin{equation}
m(r)=-|M|,\qquad f(r)=1+\frac{2|M|}{r},
\end{equation}
so no horizon can appear inside the horizonless core region. The relevant question is therefore whether the exterior matter layer generates a unique \emph{outer} horizon. \added{Equivalently, we ask for sufficient conditions on the accumulated effective mass such that the metric function develops a single outer zero.}

\begin{theorem}[Sufficient horizon-formation conditions]
\label{thm:dress}
Let \(\rho(r)\) satisfy:
\begin{enumerate}
\item the total added mass is finite,
\begin{equation}
\Delta M=4\pi\int_0^\infty \rho(s)s^2 ds<\infty;
\end{equation}

\item the effective mass is positive,
\begin{equation}
M_{\rm eff}=-|M|+\Delta M>0;
\label{Meffpos}
\end{equation}

\item 
(the single-peaked compactness profile condition)
 the function
\begin{equation}
H(r)\equiv 8\pi r^2\rho(r)
\end{equation}
has a single peak and satisfies
\begin{equation}
H(r)\to0,
\qquad r\to\infty;
\end{equation}

\item there exists \(R>0\) such that
\begin{equation}
8\pi\int_0^R \rho(s)s^2ds \ge R+2|M|.
\label{compactnesscondition}
\end{equation}
\end{enumerate}

Geometrically, condition \eqref{compactnesscondition} measures whether the local accumulation of positive matter is sufficient to compensate the horizonless character induced by the negative seed contribution. In this sense, horizon formation emerges from a competition between the negative bare core and the cumulative growth of the exterior matter layer. 

Then the spacetime admits an inner and an outer horizon. If the inequality in \eqref{compactnesscondition} is strict for some \(R\), the two horizons are distinct. If the equality is saturated at the maximum of \(\Phi\), they merge into a degenerate horizon. In both cases, the outer event horizon is unique.
\end{theorem}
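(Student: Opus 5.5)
The plan is to work entirely with the auxiliary function $\Phi(r)=2m(r)-r$ of Eq.~\eqref{Phi} and to reduce the statement to a shape analysis of $\Phi$ via Lemma~\ref{lem:critical}. Condition~1 lets us use the boundary behaviour \eqref{phi0} and \eqref{phiinf}: $\Phi(0^+)=-2|M|<0$ and $\Phi\to-\infty$ as $r\to\infty$. Condition~4 is exactly the statement $\Phi(R)=8\pi\int_0^R\rho s^2ds-R-2|M|\ge 0$. Hence $\sup_{r>0}\Phi\ge 0$, and Lemma~\ref{lemma:existence_classification}(i) gives at least one horizon. Parts (ii) and (iii) of that lemma then give two distinct zeros in the strict case $\Phi(R)>0$, and a single zero in the case $\sup\Phi=0$, which by the lemma is a double zero at the maximum.

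The main work is to show that there are \emph{exactly} two zeros, or exactly one double zero, so that the outer horizon is unique. For this I will use condition~3. Since $\Phi'(r)=H(r)-1$ by \eqref{Phiprime}, the sign of $\Phi'$ is the sign of $H-1$.

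\emph{Step 1.} I will argue that $H$ must exceed $1$ somewhere. If $H\le 1$ everywhere, then $\Phi$ is nonincreasing, so $\Phi\le\Phi(0^+)<0$, which contradicts condition~4.

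\emph{Step 2.} Interpreting ``single peak'' as: $H$ is nondecreasing up to $r_p$ and nonincreasing afterwards, with $H=0$ on $(0,r_c)$ and $H\to 0$ at infinity. Then the superlevel set $\{H>1\}$ is a single interval $(a,b)$ with $r_c\le a<r_p<b<\infty$. Consequently $\Phi$ is nonincreasing on $(0,a]$, nondecreasing on $[a,b]$, and nonincreasing on $[b,\infty)$.

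\emph{Step 3.} On $(0,a]$ we have $\Phi\le -2|M|<0$, so there are no zeros there; this also recovers that the core region is horizonless. On $[a,b]$, $\Phi$ increases from a negative value to its global maximum $\Phi(b)\ge 0$. On $[b,\infty)$ it decreases to $-\infty$. Therefore, when $\Phi(b)>0$ there is exactly one zero $r_-\in(a,b)$ and exactly one zero $r_+\in(b,\infty)$, and when $\Phi(b)=0$ the only zero is $r_H=b$ with $\Phi'(b)=0$. In every case the largest zero is unique, which is the outer event horizon. Beyond it $\Phi<0$, so $f>0$, and the exterior is static with no further horizons.

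The expected obstacle is the weakly monotone case: if $H\equiv 1$ on a subinterval, or is constant on plateaus, $\Phi$ could have a flat segment at level zero and zeros would not be isolated. I will handle this by requiring the single peak to be strict, so that $H$ is strictly monotone on each side of $r_p$ wherever it is positive. Under that assumption, strict monotonicity of $\Phi$ on $(a,b)$ and on $(b,\infty)$ follows because $H\neq 1$ there except at isolated points. I would state this as the precise reading of condition~3.

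Condition~2, $M_{\rm eff}>0$, is not needed for the counting argument. I will note that it is consistent with the conclusion: if $\Phi(R)\ge0$ then $\Delta M\ge\Delta m(R)\ge (R+2|M|)/2>|M|$, so condition~4 already implies condition~2. Its role is only interpretive, namely that the exterior is a positive-mass Schwarzschild region.
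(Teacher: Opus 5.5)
Your proof is correct and is essentially the paper's argument. In both, $\Phi'=H-1$ together with the single-peak hypothesis gives $\Phi$ a decreasing--increasing--decreasing shape, and combining this with $\Phi(0^+)=-2|M|<0$, $\Phi\to-\infty$ and $\Phi(R)\ge 0$ yields either two distinct zeros or a single degenerate zero at the maximum, so there is a unique outer horizon with $f>0$ beyond it.

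Your version is somewhat more careful than the paper's. You get the monotonicity intervals from $\{H>1\}=(a,b)$, after checking this set is nonempty, rather than counting solutions of $\Phi'=0$, and this makes explicit the strict-peak reading of condition~3 that the count tacitly needs. Your remark that condition~4 already forces $M_{\rm eff}>0$ is also consistent with the fact that the paper's proof never uses condition~2.
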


\begin{proof}
Since \(\rho(r)\) is regular at the origin, one has 
\begin{equation}
\int_0^r \rho(s)s^2ds=o(1),
\qquad r\to0,
\end{equation}
and therefore
\begin{equation}
\Phi(r)=-2|M|-r+o(1)<0,
\qquad r\to0.
\label{phiinnernew}
\end{equation}
Hence the geometry starts in the horizonless seed regime. \added{Thus any zero of \(f(r)\), or equivalently any zero of \(\Phi(r)\), must be generated by the exterior effective matter distribution.}

The derivative of the horizon function is
\begin{equation}
\Phi'(r)=8\pi r^2\rho(r)-1
=H(r)-1.
\end{equation}
Since \(H(r)\) has a single peak and satisfies \(H(r)\to0\) as \(r\to\infty\), it follows that \(\Phi'(r)\to-1\) as \(r\to\infty\). Therefore \(\Phi(r)\to-\infty\) at large distances.

By hypothesis, there exists \(R>0\) such that
\begin{equation}
8\pi\int_0^R \rho(s)s^2ds \ge R+2|M|.
\end{equation}
Using the definition of \(\Phi\), this is equivalent to
\begin{equation}
\Phi(R)\ge0.
\end{equation}
Since \(\Phi(r)<0\) near the origin and \(\Phi(r)\to-\infty\) as \(r\to\infty\), continuity implies that \(\Phi\) must cross zero before and after reaching its maximum. Therefore the spacetime admits an inner and an outer horizon.

Because \(H(r)\) has a single peak, the equation \(\Phi'(r)=0\) has at most two solutions. Hence \(\Phi\) possesses at most one local minimum and one local maximum. Once \(\Phi\) enters its final decreasing branch toward \(-\infty\), no additional zeros can occur. Therefore the outer horizon is unique as seen from infinity. \added{This is the sense in which the theorem provides a sufficient criterion for controlled horizon generation: the exterior observer encounters a single outer horizon and no further horizon crossings in the asymptotically accessible region.}

If the inequality is strict, the maximum of \(\Phi\) is positive and the two horizons are distinct. If the maximum just saturates zero, the two horizons merge into a degenerate horizon.
\end{proof}

\begin{remark}
The theorem is deliberately phrased in terms of a unique \emph{outer} event horizon rather than a unique positive root of \(\Phi\) on the whole half-line. This is the physically relevant notion for an exterior horizon-generation construction. What matters for the causal structure seen from infinity is the existence of a single outer horizon and the absence of additional horizons in the exterior domain. \added{The result should therefore be read as a classification statement about the zeros of \(f(r)\), not as a claim that the seed geometry itself provides a realistic physical model.}
\end{remark}

\begin{corollary}
Under the assumptions of Theorem \ref{thm:dress}, the asymptotic exterior \(r>r_H^{(+)}\) is causally simple in the sense that
\begin{equation}
f(r)>0,
\qquad
r>r_H^{(+)},
\end{equation}
where \(r_H^{(+)}\) denotes the outer horizon. Moreover, no additional outer horizons occur.
\end{corollary}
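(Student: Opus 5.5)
We read the corollary as a statement about the sign of \(\Phi=2m(r)-r\) beyond its largest zero. Since \(f(r)=1-2m(r)/r=-\Phi(r)/r\) for \(r>0\), we have \(f(r)>0\) if and only if \(\Phi(r)<0\). The plan is to define \(r_H^{(+)}\equiv\sup\{r>0:\Phi(r)=0\}\) and show three things: it is finite, it is a zero of \(\Phi\), and \(\Phi<0\) on \((r_H^{(+)},\infty)\).

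\emph{Step 1: existence of \(r_H^{(+)}\).} Under the hypotheses of Theorem~\ref{thm:dress}, \(\Phi(R)\ge 0\) for some \(R\), and \(\Phi(r)\to-\infty\) as \(r\to\infty\), since \(\Phi'=H-1\to-1\) (or directly from \eqref{phiinf}, as \(\Delta M<\infty\)). Hence the zero set of \(\Phi\) is nonempty and bounded above. It is also closed because \(\Phi\) is continuous, being an integral of a locally integrable \(\rho\). So the supremum is attained and is a genuine zero.

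\emph{Step 2: \(\Phi<0\) beyond the outer zero.} Suppose \(\Phi(r_1)\ge 0\) for some \(r_1>r_H^{(+)}\). Because \(\Phi(r)\to-\infty\), the intermediate value theorem gives a zero in \([r_1,\infty)\), contradicting maximality. So \(\Phi<0\), and therefore \(f>0\), on \((r_H^{(+)},\infty)\). This step uses only continuity and the asymptotics, not the single-peak condition.

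\emph{Step 3: uniqueness of the outer horizon.} Here we use the single-peak hypothesis on \(H\), through Lemma~\ref{lem:critical}. Since \(H\) rises to one peak and then decays to \(0\), the set where \(H>1\) is a single interval \((a,b)\), or is empty. Beyond \(b\), \(H<1\), so \(\Phi'<0\) and \(\Phi\) is strictly decreasing on \((b,\infty)\). Combined with Step 2, there is exactly one zero of \(\Phi\) in the region where \(\Phi\) approaches \(-\infty\) monotonically, and no zero lies outside \(r_H^{(+)}\). Thus no additional outer horizon occurs.

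The main obstacle is interpretive rather than technical. The phrases ``single peak'' and ``the outer horizon'' must be made precise, and we need the largest zero to sit on the final decreasing branch rather than at a degenerate touching point inside the hump. We would handle this by noting that Step 2 alone already forces \(f>0\) past the largest zero, whatever the shape of \(\Phi\). The monotonicity argument is then needed only to justify calling this zero the unique outer horizon, which in the degenerate case of Theorem~\ref{thm:dress} coincides with the double root.
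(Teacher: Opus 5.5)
Your proof is correct, but it organizes the argument differently from the paper.

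\textbf{The paper's route.} It obtains $f>0$ beyond $r_H^{(+)}$ from the shape of $\Phi$. Since $\Phi'=H-1$ vanishes at most twice, $\Phi$ has at most one local minimum and one local maximum. So once $\Phi$ crosses zero on its final decreasing branch, it cannot come back up.

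\textbf{Your route.} You define $r_H^{(+)}$ as the largest zero of $\Phi$. You then get $\Phi<0$ beyond it from continuity and $\Phi\to-\infty$ alone, via the intermediate value theorem. The single-peak hypothesis enters only in Step 3, to place that zero on the monotone branch past the peak of $H$ and to identify it with the theorem's outer (or double) root.

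\textbf{What each buys.} Your argument is more elementary and more general. Exterior positivity of $f$ holds for any profile with $\Delta M<\infty$ and $\sup\Phi\ge 0$, including non-monotone ones. It also shows that the single-peak condition controls the number of zeros and their position relative to the peak of $H$, not positivity in the exterior. The paper's argument instead ties the conclusion to the extremum structure of $\Phi$. That is what additionally bounds the total number of horizons by two, though the paper states it less carefully.

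\textbf{Two points to tighten in Step 3.}
\begin{enumerate}
\item You should state why the largest zero lies in $[b,\infty)$. Since $\{H>1\}=(a,b)$, $\Phi$ is nonincreasing on $(0,a]$, so $\Phi\le -2|M|$ there. $\Phi$ is increasing on $[a,b]$. Hence $\Phi(b)=\max\Phi$, and $\Phi(R)\ge 0$ then forces $\Phi(b)\ge 0$.
\item Strict $H<1$ past $b$ requires $H$ to be strictly decreasing after its peak. The paper assumes this tacitly when it says $\Phi'=0$ has at most two solutions. Your Step 2 makes this assumption unnecessary for the positivity claim itself.
\end{enumerate}
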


\begin{proof}
Since \(\Phi(r)\to-\infty\) as \(r\to\infty\), the function eventually enters its final decreasing branch. Because \(H(r)\) has a single peak, \(\Phi'(r)=H(r)-1\) can vanish at most twice, so \(\Phi\) possesses at most one local minimum and one local maximum. Consequently, after crossing zero at the outer horizon \(r_H^{(+)}\), the function \(\Phi\) remains negative for all larger \(r\). Therefore
\[
f(r)=1-\frac{2m(r)}{r}>0,
\qquad
r>r_H^{(+)}.
\]
Since no further zero of \(\Phi\) can occur beyond the final decreasing branch, no additional outer horizons appear.
\end{proof}

The theorem gives a clean sufficient criterion. It is also instructive to understand how horizon generation fails when its hypotheses are relaxed.

\begin{proposition}[Failure modes]
\label{prop:failure}
If any of the following occurs,
\begin{enumerate}
%\item \(r^2\rho(r)\) is not monotonic,
\item the compactness condition
\begin{equation}
8\pi\int_0^R \rho(s)s^2ds \ge R+2|M|
\end{equation}
is not satisfied for any \(R>0\);
\item \(\Delta M\le |M|\),
\end{enumerate}
then the horizon-generation mechanism may fail to produce an outer event horizon. More precisely:
\begin{itemize}
%\item non-monotonicity of \(r^2\rho(r)\) allows multiple extrema of \(\Phi\), hence multiple horizons; non-monotonicity of \(r^2\rho(r)\) allows multiple extrema of \(\Phi\); however, additional extrema do not necessarily generate additional horizons, since the corresponding oscillations of \(\Phi\) may occur entirely above or below the zero level without producing new zero crossings;
\item insufficient compactness prevents \(\Phi(r)\) from reaching non-negative values;
\item insufficient total added mass leaves the asymptotic mass non-positive and prevents a standard positive-mass Schwarzschild exterior.
\end{itemize}
\added{These failure modes emphasize that horizon generation is controlled by both global and local properties of the effective density profile: the total mass fixes the asymptotic behavior, while the compactness condition determines whether a zero of \(f(r)\) can actually be produced at finite radius.}
\end{proposition}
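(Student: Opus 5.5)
The plan is to work entirely with the auxiliary function $\Phi(r)=2m(r)-r$ of Eq.~\eqref{Phi}, together with its endpoint behaviour \eqref{phi0} and \eqref{phiinf}. Throughout I assume $\rho\ge 0$, as in Lemma~\ref{lemma:existence_classification}; otherwise the exterior branch may not be monotone. The two failure modes are handled separately.

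\emph{First failure mode.} Suppose $8\pi\int_0^R\rho(s)s^2ds<R+2|M|$ for every $R>0$. By the definition of $\Phi$, this says exactly that $\Phi(R)<0$ for every $R>0$. Hence $\sup_{r>0}\Phi\le 0$.

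The first step is to show that the supremum is not attained at zero. Since $\Phi$ is strictly negative at every point, it has no zeros at all. In particular, the degenerate case (iii) of Lemma~\ref{lemma:existence_classification} cannot occur, because that case needs a point where $\Phi(r_H)=0$. Note that the supremum may still equal $0$ as a limit: this happens when $\Phi(r)\to 0^-$ without being attained. That cannot happen at $r\to\infty$ by \eqref{phiinf}, nor at $r\to 0^+$ by \eqref{phi0}. Strict negativity at every finite $R$ is therefore precisely the absence of zeros.

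So $f(r)=-\Phi(r)/r>0$ for all $r>0$, and no horizon exists. This gives the statement ``insufficient compactness prevents $\Phi$ from reaching non-negative values.''

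\emph{Second failure mode.} Suppose $\Delta M\le|M|$. Then $M_{\rm eff}=-|M|+\Delta M\le 0$ by \eqref{asymp}, so the exterior is asymptotically Schwarzschild with non-positive ADM mass. That is already the stated consequence.

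To make the ``may fail to produce an outer horizon'' statement sharper, I would use $\rho\ge 0$. The mass function $m(r)=-|M|+\Delta m(r)$ is then non-decreasing, so $m(r)\le M_{\rm eff}\le 0$ for all $r$. Consequently
\[
\Phi(r)=2m(r)-r\le 2M_{\rm eff}-r<0
\]
for every $r>0$. This shows that in fact no horizon forms at all, which is stronger than required.

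\emph{Main obstacle.} The delicate point is the phrasing ``may fail.'' The first condition alone does not require $M_{\rm eff}\le 0$, and the argument above shows that horizon formation actually fails whenever either condition holds. The only subtlety I expect is ruling out a degenerate zero in the first mode, and the strict inequality at every finite $R$ disposes of it. I would also remark that without $\rho\ge0$ the second claim reduces to the asymptotic statement only.
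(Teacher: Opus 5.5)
Your proof is correct, and it reaches a stronger conclusion than the paper's proof does.

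\textbf{First failure mode.} This part is exactly the paper's argument. The failed compactness inequality is equivalent to $\Phi(R)<0$ for every $R>0$, so $f=-\Phi/r$ never vanishes. Your detour on whether $\sup\Phi$ could be a non-attained $0$ is unnecessary, because pointwise strict negativity already rules out zeros. That detour also appeals to \eqref{phiinf}, which silently requires $\Delta M<\infty$, and the first item does not assume this. The slip is harmless only because the remark is superfluous.

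\textbf{Second failure mode.} Here you genuinely go beyond the paper. The paper stops at the asymptotic observation $M_{\rm eff}\le 0$, i.e.\ the exterior is not a positive-mass Schwarzschild one, and proves nothing about zeros of $\Phi$. You use $\rho\ge 0$: then $m$ is non-decreasing with $m(r)\le M_{\rm eff}\le 0$, so $\Phi(r)\le 2M_{\rm eff}-r<0$ for all $r>0$, and no horizon forms at all.

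This gives two things. First, the conclusion becomes ``fails'' rather than ``may fail.'' Second, for non-negative densities the second item is a special case of the first, since
\[
8\pi\int_0^R\rho\, s^2\,ds\le 2\Delta M\le 2|M|<R+2|M|\quad\text{for every } R>0 .
\]
The cost is the explicit sign assumption on $\rho$, which the paper's weaker, purely asymptotic version of the second item does not need. You correctly flag this dependence.
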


\begin{proof}
Each claim follows directly from the asymptotic behavior of \(\Phi\). If the compactness condition is not satisfied for any \(R>0\), then
\[
8\pi\int_0^R \rho(s)s^2ds < R+2|M|
\]
for all \(R>0\). Equivalently, $\Phi(R)<0$ for all \(R>0\), so no horizon forms. Finally, if \(\Delta M\le |M|\), then \(M_{\rm eff}\le0\), and the asymptotic mass is non-positive. In this case the asymptotic geometry is not a standard positive-mass Schwarzschild exterior.
\end{proof}

%========================================================
\section{Representative horizon-generating profiles}
\label{sec5}
%========================================================

In this section we illustrate the horizon-generation mechanism using representative matter profiles satisfying different subsets of the conditions established in Theorem~\ref{thm:dress}. The purpose of these examples is not to construct new black hole geometries, many of which are already known in different contexts, but rather to exhibit explicitly how different classes of effective matter distributions modify the horizon structure of the negative-mass Schwarzschild seed. \added{As emphasized above, the seed geometry is used only as a minimal analytically controlled representative of a horizonless spacetime with a timelike singularity. The examples below should therefore be understood as tests of horizon-generating conditions on effective matter profiles.}

%========================================================
\subsection{Monotonic asymptotically flat profiles}
%========================================================

A simple analytically tractable family is provided by exterior power-law layers,
\begin{equation}
\rho(r)=
\begin{cases}
0, & 0<r<r_c,\\[1ex]
\dfrac{\rho_0}{r^n}, & r\ge r_c,
\end{cases}
\qquad \rho_0>0.
\label{powerlaw}
\end{equation}

For \(n>3\), the total added mass is finite,
\begin{equation}
\Delta M
=
\frac{4\pi\rho_0}{n-3}\,r_c^{\,3-n},
\end{equation}
and therefore the geometry is asymptotically Schwarzschild with effective mass
\begin{equation}
M_{\rm eff}
=
-|M|
+
\frac{4\pi\rho_0}{n-3}r_c^{\,3-n}.
\end{equation}

Moreover,
\begin{equation}
r^2\rho(r)=\rho_0 r^{2-n},
\end{equation}
which is strictly decreasing for all \(n>2\). Hence the monotonicity condition required by Theorem~\ref{thm:dress} is automatically satisfied throughout the asymptotically flat range \(n>3\). 

The lifting condition becomes
\begin{equation}
8\pi \rho_0 r_c^{\,2-n}>1,
\end{equation}
while positivity of the asymptotic mass requires
\begin{equation}
\frac{4\pi\rho_0}{n-3}r_c^{\,3-n}>|M|.
\end{equation}
Therefore the power-law layer generates horizons whenever
\begin{equation}
n>3,
\qquad
8\pi \rho_0 r_c^{\,2-n}>1,
\qquad
\frac{4\pi\rho_0}{n-3}r_c^{\,3-n}>|M|.
\label{powercriterion}
\end{equation}

This family provides the simplest explicit realization of the theorem: monotonic localized matter profiles naturally generate a unique outer event horizon without additional exterior trapped regions. 

A particularly instructive case is \(n=4\), for which
\begin{equation}
f(r)
=
1-\frac{2M_{\rm eff}}{r}
+\frac{8\pi\rho_0}{r^2}.
\label{n4lapse}
\end{equation}
Although this expression is algebraically similar to the Reissner--Nordström lapse function, the source is not Maxwellian. The \(r^{-2}\) term should therefore be interpreted as an effective geometric imprint of the matter profile rather than a genuine charge contribution. \added{This example illustrates how RN-like horizon structures may arise from effective anisotropic matter without assuming an electromagnetic origin.}

%========================================================
\subsection{The critical logarithmic branch}
%========================================================

The marginal case \(n=3\) is qualitatively different and must be treated separately. In this case,
\begin{equation}
\rho(r)=
\begin{cases}
0, & 0<r<r_c,\\[1ex]
\dfrac{\rho_0}{r^3}, & r\ge r_c,
\end{cases}
\end{equation}
leading to
\begin{equation}
m(r)
=
-|M|
+
4\pi\rho_0
\ln\!\left(\frac{r}{r_c}\right).
\end{equation}
The lapse function becomes
\begin{equation}
f(r)
=
1+\frac{2|M|}{r}
-
\frac{8\pi\rho_0}{r}
\ln\!\left(\frac{r}{r_c}\right).
\end{equation}

Although \(f(r)\to1\) asymptotically, the integrated mass grows logarithmically and the geometry is not asymptotically Schwarzschild in the ADM sense. The branch \(n=3\) therefore represents a critical transition between localized profiles (\(n>3\)) and slowly decaying non-localized matter distributions. 

Interestingly, logarithmic mass growth of this type also appears in several phenomenological contexts involving slowly decaying dark-matter-inspired halo profiles \cite{Li:2012zx}. From the present viewpoint, the \(n=3\) branch provides an explicit example in which horizon formation may occur even though standard ADM asymptotic behavior is lost. 

%========================================================
\subsection{Smooth horizon-generating profiles}
%========================================================

The discontinuous exterior layers considered above provide a clean realization of the theorem, but they switch on the matter distribution abruptly at \(r=r_c\). Since Einstein's equations in the Schwarzschild gauge imply
\begin{equation}
p_r(r)=-\rho(r),
\end{equation}
a discontinuity in the density also produces a discontinuity in the radial pressure. It is therefore natural to consider smooth density profiles. \added{Smooth profiles also make clear that the horizon-generation mechanism is not an artifact of thin layers or discontinuous sources.}

A broad class of regularized profiles is given by
\begin{equation}
\rho(r)=
\rho_0
\frac{r^\eta}{(r^\mu+a^\mu)^n},
\label{generalrho}
\end{equation}
with $\rho_0>0$, $a>0$, $\eta\ge0$ and $\mu>0$. Near the origin,
\begin{equation}
\rho(r)\sim r^\eta,
\end{equation}
so the density remains regular for $\eta\ge0$, while the geometry still approaches the negative-mass Schwarzschild branch close to the center,
\begin{equation}
m(r)
=
-|M|+\mathcal O(r^{\eta+3}).
\end{equation}
\added{Thus, the central seed contribution is preserved near the origin, while the smooth effective matter profile controls the possible generation of horizons at finite radius.}

The total added mass is finite whenever
\begin{equation}
n>\frac{\eta+3}{\mu},
\end{equation}
yielding
\begin{align}
M_{\rm eff}
=
-|M|
+
\frac{4\pi\rho_0}{\mu}
a^{\eta+3-\mu n}
\frac{
\Gamma\left(\frac{\eta+3}{\mu}\right)
\Gamma\left(n-\frac{\eta+3}{\mu}\right)
}
{\Gamma(n)} .
\end{align}

The horizon structure is controlled by
\begin{equation}
\Phi'(r)=H(r)-1,
\end{equation}
where
\begin{equation}
H(r)=
8\pi\rho_0
\frac{r^{\eta+2}}{(r^\mu+a^\mu)^n}.
\end{equation}

When
\begin{equation}
\mu n>\eta+2,
\end{equation}
the function \(H(r)\) develops a single maximum,
\begin{equation}
r_{\rm max}
=
a
\left(
\frac{\eta+2}{\mu n-\eta-2}
\right)^{1/\mu},
\end{equation}
which allows the lifting condition to be evaluated explicitly. Consequently, this family provides a smooth realization of the same horizon-generation mechanism described previously for discontinuous exterior layers. \added{In particular, the parameters \(\rho_0\), \(a\), \(\eta\), \(\mu\), and \(n\) control the height, width, and localization of the compactness profile \(H(r)\), and therefore determine whether \(\Phi(r)\) develops the positive hump required for horizon formation.}

A particularly interesting example is obtained for
\begin{equation}
\eta=0,
\qquad
\mu=2,
\qquad
n=\frac52,
\end{equation}
which yields
\begin{equation}
\rho(r)=
\frac{\rho_0}{(r^2+a^2)^{5/2}}.
\end{equation}
With the identifications
\begin{equation}
a=l_0,
\qquad
\rho_0=\frac{3Ml_0^2}{4\pi},
\end{equation}
the corresponding lapse function reduces to
\begin{equation}
f(r)
=
1-\frac{2Mr^2}{(r^2+l_0^2)^{3/2}},
\end{equation}
which coincides with the T-duality-inspired geometry of~\cite{Nicolini:2019irw} once the negative-mass core contribution is removed.

Thus the T-duality-inspired solution emerges naturally as a smooth limiting realization of the broader framework developed here.

\section{Conclusions}
\label{sec6}
%========================================================
In this work we addressed the inverse problem of whether a timelike naked singularity of negative-mass Schwarzschild type can acquire horizons through the presence of a classical anisotropic matter distribution. \added{The negative-mass Schwarzschild geometry was used only as the simplest static, asymptotically flat, horizonless seed with a timelike singularity. It was not assumed to represent a realistic astrophysical object or a physical endpoint of gravitational collapse.} Within a static and spherically symmetric framework, we showed that the horizon structure is completely governed by the auxiliary function
$
\Phi(r)=2m(r)-r,
$
whose zeros determine the existence and multiplicity of horizons. On this basis, we formulated sufficient horizon-formation conditions that characterize when an exterior effective matter sector generates an outer event horizon.

Our main result is that non-negative, sufficiently localized, and monotonic matter profiles can generate horizons whenever the cumulative matter contribution compensates the negative seed contribution at finite radius. Importantly, horizon formation is not controlled by the total added mass alone, but by the local radial organization of the matter sector. A matter distribution may carry enough asymptotic mass and still fail to produce a horizon if its inner compactness is too weak. In this sense, the horizon-generation mechanism is intrinsically local-global: the profile must both lift \(\Phi(r)\) near the core and maintain sufficient cumulative growth in the exterior region. \added{These conditions are formulated directly in terms of the effective density profile and therefore do not depend on a specific microscopic origin of the source. The same criteria may be applied to classical anisotropic fluids, additional matter fields, or effective stress-energy tensors arising from semiclassical or modified-gravity corrections.}

Monotonicity plays a decisive role in organizing the causal structure. When \(r^2\rho(r)\) is monotonic in the relevant domain, the resulting geometry admits a unique outer event horizon and a causally simple exterior. By contrast, non-monotonic or oscillatory profiles generically produce multiple extrema of \(\Phi(r)\), leading to multi-horizon configurations with a richer causal structure. This shows that the radial behavior of the matter profile is encoded directly in the global horizon structure of the spacetime. \added{Thus, the analysis provides a classification of matter profiles according to the number and type of horizons they can generate.}

The representative examples analyzed in this paper illustrate the versatility of the formalism. Asymptotically Schwarzschild power-law layers provide the simplest realization of the theorem, the marginal logarithmic branch \(n=3\) marks a critical transition between localized and non-localized profiles, and RN-like lapse functions emerge naturally from specific anisotropic profiles without invoking Maxwell sources. These cases demonstrate that familiar causal structures can arise purely from the geometric organization of effective matter. 

It is important to stress that the singularity is never removed in the present construction. Unlike regular black hole models based on de Sitter cores or nonlinear electromagnetic sectors, our analysis does not regularize the central divergence. The result obtained here is therefore purely causal: suitable matter distributions can generate an outer horizon, but they do not eliminate the singularity itself. \added{Consequently, the physical content of the construction lies not in making the seed singularity regular, but in identifying the minimal effective-matter conditions under which a horizon can arise.}

The analysis is restricted to static configurations and should thus be interpreted as an existence and classification result rather than a dynamical formation mechanism. Questions related to gravitational collapse, perturbative stability, quasi-normal modes, and nonlinear evolution remain open. Even so, the present work provides a controlled setting in which the relation between effective matter organization and horizon formation can be studied within classical general relativity. \added{In particular, the results may serve as benchmarks for future models in which the same effective profiles arise from more fundamental mechanisms, such as additional fields, semiclassical backreaction, or modified gravitational dynamics.}

More broadly, the results suggest that matter profiles may be classified according to their horizon-generating efficacy: some generate a single outer horizon, some fail to generate horizons, and others produce more intricate multi-horizon geometries.

\bibliography{biblio1.bib}

@article{Penrose:1964wq,
    author = "Penrose, Roger",
    title = "{Gravitational collapse and space-time singularities}",
    doi = "10.1103/PhysRevLett.14.57",
    journal = "Phys. Rev. Lett.",
    volume = "14",
    pages = "57--59",
    year = "1965"
}

@article{Penrose:1969pc,
    author = "Penrose, R.",
    title = "{Gravitational collapse: The role of general relativity}",
    doi = "10.1023/A:1016578408204",
    journal = "Riv. Nuovo Cim.",
    volume = "1",
    pages = "252--276",
    year = "1969"
}

@book{Wald1984,
  author = {Wald, Robert M.},
  title = {General Relativity},
  publisher = {University of Chicago Press},
  year = {1984}
}

@book{HawkingEllis1973,
  author = {Hawking, S. W. and Ellis, G. F. R.},
  title = {The Large Scale Structure of Space-Time},
  publisher = {Cambridge University Press},
  year = {1973}
}

@article{Joshi2007,
  author = {Joshi, Pankaj S.},
  title = {Gravitational Collapse and Spacetime Singularities},
  journal = {Cambridge University Press},
  year = {2007}
}

@article{JoshiMalafarina2011,
  author = {Joshi, P. S. and Malafarina, D.},
  title = {Recent developments in gravitational collapse and spacetime singularities},
  journal = {Int. J. Mod. Phys. D},
  volume = {20},
  pages = {2641},
  year = {2011}
}

@article{Bardeen1968,
  author = {Bardeen, J. M.},
  title = {Non-singular general-relativistic gravitational collapse},
  journal = {Proc. GR5},
  year = {1968}
}

@article{Hayward2006,
  author = {Hayward, Sean A.},
  title = {Formation and evaporation of non-singular black holes},
  journal = {Phys. Rev. Lett.},
  volume = {96},
  pages = {031103},
  year = {2006}
}

@article{Eardley1978,
    author = "Eardley, Douglas M. and Smarr, Larry",
    title = "{Time function in numerical relativity. Marginally bound dust collapse}",
    doi = "10.1103/PhysRevD.19.2239",
    journal = "Phys. Rev. D",
    volume = "19",
    pages = "2239--2259",
    year = "1979"
}

@article{Gundlach2025,
    author = "Gundlach, Carsten and Hilditch, David and Mart{\'\i}n-Garc{\'\i}a, Jos{\'e} M.",
    title = "{Critical Phenomena in Gravitational Collapse}",
    eprint = "2507.07636",
    archivePrefix = "arXiv",
    primaryClass = "gr-qc",
    month = "7",
    year = "2025"
}

@article{Casals2016,
    author = "Casals, Marc and Fabbri, Alessandro and Mart{\'\i}nez, Cristi{\'a}n and Zanelli, Jorge",
    title = "{Quantum dress for a naked singularity}",
    eprint = "1605.06078",
    archivePrefix = "arXiv",
    primaryClass = "hep-th",
    doi = "10.1016/j.physletb.2016.06.044",
    journal = "Phys. Lett. B",
    volume = "760",
    pages = "244--248",
    year = "2016"
}

@article{Nicolini:2019irw,
    author = "Nicolini, Piero and Spallucci, Euro and Wondrak, Michael F.",
    title = "{Quantum Corrected Black Holes from String T-Duality}",
    eprint = "1902.11242",
    archivePrefix = "arXiv",
    primaryClass = "gr-qc",
    doi = "10.1016/j.physletb.2019.134888",
    journal = "Phys. Lett. B",
    volume = "797",
    pages = "134888",
    year = "2019"
}

@article{Christodoulou1984,
    author = "Christodoulou, Demetrios",
    title = "{Violation of cosmic censorship in the gravitational collapse of a dust cloud}",
    doi = "10.1007/BF01223743",
    journal = "Commun. Math. Phys.",
    volume = "93",
    pages = "171--195",
    year = "1984"
}

@article{Christodoulou1999,
    author = "Christodoulou, Demetrios",
    title = "{The Instability of Naked Singularities in the Gravitational Collapse of a Scalar Field}",
    eprint = "math/9901147",
    archivePrefix = "arXiv",
    reportNumber = "Annals migration 4-2001",
    doi = "10.2307/121023",
    journal = "Annals Math.",
    volume = "149",
    number = "1",
    pages = "183",
    year = "1999"
}

@article{Dymnikova1992,
  author = {Dymnikova, Irina},
  title = {Vacuum nonsingular black hole},
  journal = {General Relativity and Gravitation},
  volume = {24},
  number = {3},
  pages = {235--242},
  year = {1992},
  doi = {10.1007/BF00760226}
}

@article{MazurMottola2004,
  author = {Mazur, Pawel O. and Mottola, Emil},
  title = {Gravitational vacuum condensate stars},
  journal = {Proceedings of the National Academy of Sciences},
  volume = {101},
  number = {26},
  pages = {9545--9550},
  year = {2004},
  doi = {10.1073/pnas.0402717101},
  eprint = {gr-qc/0407075},
  archivePrefix = {arXiv}
}

@article{VisserWiltshire2004,
  author = {Visser, Matt and Wiltshire, David L.},
  title = {Stable gravastars -- an alternative to black holes?},
  journal = {Classical and Quantum Gravity},
  volume = {21},
  number = {4},
  pages = {1135--1152},
  year = {2004},
  doi = {10.1088/0264-9381/21/4/027},
  eprint = {gr-qc/0310107},
  archivePrefix = {arXiv}
}

@article{Li:2012zx,
    author = "Li, Ming-Hsun and Yang, Kwei-Chou",
    title = "{Galactic Dark Matter in the Phantom Field}",
    eprint = "1204.3178",
    archivePrefix = "arXiv",
    primaryClass = "astro-ph.CO",
    reportNumber = "CYCU-HEP-12-04",
    doi = "10.1103/PhysRevD.86.123015",
    journal = "Phys. Rev. D",
    volume = "86",
    pages = "123015",
    year = "2012"
}

@article{Gleiser:2006yz,
    author = "Gleiser, Reinaldo J. and Dotti, Gustavo",
    title = "{Instability of the negative mass Schwarzschild naked singularity}",
    eprint = "gr-qc/0604021",
    archivePrefix = "arXiv",
    doi = "10.1088/0264-9381/23/15/021",
    journal = "Class. Quant. Grav.",
    volume = "23",
    pages = "5063--5078",
    year = "2006"
}
\bibliographystyle{elsarticle-num}

\end{document}